\documentclass[sigconf,nonacm]{acmart}

\setcopyright{none}
\acmDOI{}
\usepackage{booktabs}
\usepackage{pgfplots}
\pgfplotsset{compat=1.15}

\newtheorem{proposition}{Proposition}
\newtheorem{thm}{Theorem}
\newtheorem{corollary}{Corollary}

\newcommand{\rstar}{0.35}
\newcommand{\taum}{3.0}

\begin{document}

\title{Ride, Raise, or Wake}
\subtitle{A Theory of Light-Thread Placement on Multi-Tier Processors}

\author{Mohamed Amine Bergach}
\email{mbergach@gmail.com}
\affiliation{%
  \institution{Illumina}
  \city{San Diego}
  \state{California}
  \country{USA}}
\renewcommand{\shortauthors}{Bergach}

\begin{abstract}
Phones and laptops now mix several kinds of cores, and each kind sits in its own voltage and
clock domain. When a light thread arrives, should it run beside heavy threads on a fast cluster
that is already awake, or wake a cluster of efficient cores? We answer with three costs. A thread
can \emph{ride} an awake domain at the voltage other threads hold, \emph{raise} that voltage for
everyone, or \emph{wake} a sleeping domain and pay its fixed power. From these costs follows a
simple rule: ride while the thread's demand is below $r^*=\Phi/(e_h-e_l)$, the fixed power of the
cluster it would wake over the gap in energy per operation; otherwise wake the smallest tier that
can serve it. The best tier is not monotone in demand: tiny threads ride high, medium threads wake
a low tier, heavy threads go high again. Two common designs err on opposite sides of the rule:
grouping a light thread with the heavy threads of its application overpays above the threshold,
and an energy model without idle costs overpays below it. In a model with five tiers, grouping
costs 80 to 145\% more energy than the rule, and more as tiers are added. Measurements on an
Apple M6 anchor the parameters.

\end{abstract}

\maketitle

\section{Introduction}
\label{sec:intro}

Phones and laptops now mix several kinds of cores. Apple's M6 has efficiency, performance and
super cores; Intel's client parts add low-power efficiency cores; phone chips add a prime core. We
call each kind a \emph{tier}. Each tier sits in a voltage and clock domain, a \emph{cluster}, and
sometimes two tiers share one: on the M6 the performance and super cores share a cluster. A
scheduler knows where a heavy thread belongs (a fast core) and where background work belongs (an
efficient one). The hard case is the light thread, whose demand is small but not zero.

Where it runs matters. On the M6 we placed a light thread next to two busy threads of its own
application. macOS ran it on the fast cluster, at the voltage the busy threads held, and it used
2.1 to 2.2 times the energy per operation it used when we launched it as an application of its own,
which macOS ran on the efficiency cores. When the fast cluster had room to rise, the light thread
raised its voltage for every thread there and paid up to 3.9 times. Yet riding a busy cluster is
not always wrong: waking an idle cluster has a fixed price, and a thread light enough should not
pay it.

We make this trade exact with a small model and three costs. A thread can \emph{ride} an awake
cluster at the operating point others hold, \emph{raise} that point, or \emph{wake} a sleeping
cluster. Our contributions:
\begin{itemize}
\item A threshold rule: ride while the demand is below $r^*=\Phi/(e_h-e_l)$, with a dimensionless
form and a version for bursty threads (Section~\ref{sec:costs}).
\item A structure theorem: the best tier is not monotone in demand, exactly when a larger tier is
held awake by others while a cheaper one sleeps (Section~\ref{sec:tiers}).
\item Closed forms for many tiers and for the losses of two common policies, which err on opposite
sides of the rule (Sections~\ref{sec:tiers} and~\ref{sec:policies}).
\item A five-tier model and anchors from measurements on an M6 (Section~\ref{sec:anchors}).
\end{itemize}

Parts of this are known. Android's energy-aware scheduler and Qualcomm's have charged for waking
an idle cluster~\cite{rasmussen2014eas,qcom_hmp_powercost_src}, Linux has packed small
tasks~\cite{guittot2013packing}, Linux charges a thread for raising a domain~\cite{linuxeas}, and
Apple names the cost of riding a busy cluster a ``passenger tax''~\cite{xnu_edgesched2025}. What
we add is the closed-form rule, the shape of the decision, and what the common policies lose.

\section{A small model}
\label{sec:model}

\paragraph{Clusters and tiers.} A cluster is a group of cores that share one voltage $V$ and one
clock $f$. Its voltage follows a table $V(f)$ with a floor $V_0$ at its lowest clock $f_0$, and its
clock has a ceiling $f_{\max}$. A cluster may hold cores of more than one design. It is either
asleep, and costs nothing, or awake, and pays a fixed \emph{overhead} $\Phi(V)$ for its caches,
clock tree and links, whatever its load.

\paragraph{Energy per operation.} A busy core spends, per operation,
\begin{equation}
e(V,f) = c\,V^2 + \frac{s(V)}{I f},
\label{eq:e}
\end{equation}
where $c$ is the capacitance switched per operation, $I$ the operations per cycle and $s(V)$ the
static power of a busy core. An idle core turns itself off.

\paragraph{Governor.} A cluster runs at the clock its most demanding thread needs,
$f=\max\bigl(f_0,\ \max_j r_j/(u I_j)\bigr)$, where $r_j$ is thread $j$'s demand in operations per
second and $u\le1$ a utilization target: $u=1$ runs just fast enough, a smaller $u$ keeps headroom.
macOS behaves like $u\approx0.6$ to $0.8$ in our runs, and Linux's frequency governor uses
$0.8$~\cite{schedutil_src2026}.

\paragraph{The light thread.} The thread we place has demand $r$ and gets a core of its own. The
threads already in a cluster keep their work; at the cluster's operating point they draw
$D(V,f)=\sum_j r_j e_j(V,f)$. Table~\ref{tab:symbols} lists the symbols.

\begin{table}[t]
\caption{Symbols.}
\label{tab:symbols}
\centering\small
\begin{tabular}{ll}
\toprule
$r$ & demand of the light thread (operations per second)\\
$e(V,f)$ & energy per operation of a busy core, (\ref{eq:e})\\
$\Phi$ & overhead power of an awake cluster\\
$u$ & utilization target of the governor\\
$e_h$, $e_l$ & energy per operation riding the busy cluster, and\\
 & on the efficient cluster at its floor\\
$r^*$ & threshold demand\\
$\omega$ & $\Phi_l/(r_l e_l)$: overhead over one core's power\\
$\tau$ & $e_h/e_l$: the passenger tax\\
\bottomrule
\end{tabular}
\end{table}

\section{Ride, raise, or wake}
\label{sec:costs}

\begin{proposition}[Three costs]
\label{prop:costs}
Adding the light thread to a core of design $d$ in cluster $C$ raises the power by
\begin{itemize}
\item \emph{ride}, if $C$ is awake and its clock already serves the thread ($r\le uI_d f_C$):
  $\Delta P=r\,e_d(V_C,f_C)$;
\item \emph{raise}, if $C$ is awake but must speed up, to $f'=r/(uI_d)$ and $V'=V_C(f')$:
  $\Delta P=r\,e_d(V',f')+\bigl[D_C(V',f')-D_C(V_C,f_C)\bigr]+\bigl[\Phi_C(V')-\Phi_C(V_C)\bigr]$;
\item \emph{wake}, if $C$ is asleep, at $f'=\max(f_0,r/(uI_d))$ and $V'=V_C(f')$:
  $\Delta P=r\,e_d(V',f')+\Phi_C(V')$.
\end{itemize}
\end{proposition}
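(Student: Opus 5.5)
The plan is to write the total power of cluster $C$ before and after the light thread is added, and take the difference. The model of Section~\ref{sec:model} gives this total as a sum of three parts. The overhead is $\Phi_C(V)$ if $C$ is awake and $0$ if it is asleep. The existing threads draw $D_C(V,f)=\sum_j r_j e_j(V,f)$. Each busy core spends $r\,e(V,f)$, using~(\ref{eq:e}), since a core serving demand $r$ executes $r$ operations per second at energy $e(V,f)$ each. Idle cores contribute nothing, because an idle core turns itself off. So the light thread, which gets a core of its own, adds exactly $r\,e_d(V,f)$ at whatever operating point the cluster ends up at. It also changes the other two parts only through that operating point.

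The next step is to find the new operating point from the governor rule $f=\max\bigl(f_0,\max_j r_j/(uI_j)\bigr)$. The light thread's entry in the inner maximum is $r/(uI_d)$. From here the argument splits into three cases.

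\emph{Ride.} If $C$ is awake and $r\le uI_d f_C$, the maximum does not change. So $f$ stays at $f_C$, and $V=V_C(f)$ stays at $V_C$. Then $D_C$ and $\Phi_C$ are unchanged, and the difference is $r\,e_d(V_C,f_C)$.

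\emph{Raise.} If $C$ is awake and $r>uI_d f_C$, the new entry exceeds the old maximum, and $f_C\ge f_0$, so $f'=r/(uI_d)$ and $V'=V_C(f')$. The difference then has three terms: the new core's $r\,e_d(V',f')$, the change $D_C(V',f')-D_C(V_C,f_C)$, and the change in overhead. Here I rely on the fact that each existing thread's demand $r_j$ is fixed ("keep their work"), so only its energy per operation moves with the operating point.

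\emph{Wake.} If $C$ is asleep, the before-power is zero. The cluster then holds only the light thread, so $f'=\max(f_0,r/(uI_d))$. The after-power is $r\,e_d(V',f')+\Phi_C(V')$, with no $D_C$ term because no other threads are present.

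The only delicate step is justifying that the decomposition is additive and complete. Two points need checking. First, overhead depends on $V$ alone and not on load, which is exactly how $\Phi(V)$ is defined. Second, the threads' energies depend on the shared $(V,f)$ only through~(\ref{eq:e}). I would also state the implicit assumptions explicitly. One is that a free core of design $d$ exists, so no other thread is displaced. Another is that $f'\le f_{\max}$, since otherwise the placement is infeasible and no case applies. Once these are stated, each case is a one-line subtraction.
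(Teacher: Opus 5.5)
Your proposal is correct and follows the paper's own argument: write the cluster's power as the threads' draw at the shared operating point plus the overhead, then subtract before from after in each of the three cases. You also work out the governor step that gives the new clock, and you state three assumptions that the paper's one-line proof leaves implicit: a free core of design $d$ exists, $f'\le f_{\max}$, and an asleep cluster holds no other threads.
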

\begin{proof}
A cluster's power is its threads' power at the shared operating point plus its overhead. The three
cases are the change before and after, depending on whether the operating point moves and whether
the overhead was already paid.
\end{proof}

The best place for the thread is where $\Delta P$ is least. The raise case is the one that is easy
to forget: the term $D(V')-D(V)$ is paid by the other threads. To first order it is about
$2D\,\Delta V/V$, so a light thread that lifts a crowded cluster by a few percent of its voltage
pays for everyone there.

\begin{proposition}[Ride or wake]
\label{prop:threshold}
Let cluster $h$ be awake with room for the thread, at energy per operation $e_h$, and let cluster
$l$ be asleep, with the thread fitting at $l$'s floor: energy $e_l$ and overhead $\Phi_l$. Riding
$h$ costs less than waking $l$ if and only if
\begin{equation}
r < r^* = \frac{\Phi_l}{e_h-e_l},
\label{eq:rstar}
\end{equation}
and always if $e_h\le e_l$.
\end{proposition}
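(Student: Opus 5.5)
The plan is to read both costs off Proposition~\ref{prop:costs} and compare them directly. Cluster $h$ is awake and its clock already serves the thread, so the ride case applies:
\[
\Delta P_{\mathrm{ride}} = r\,e_h .
\]
Here $e_h=e_d(V_h,f_h)$ is evaluated at the operating point the other threads hold. Because the thread rides, that operating point does not move. The bracketed terms of the raise case therefore do not appear: nothing is added to $D_h$ and nothing to $\Phi_h$. Cluster $l$ is asleep, so the wake case applies. By hypothesis the thread fits at $l$'s floor, i.e.\ $r\le uI_d f_0$. Then $f'=\max(f_0,r/(uI_d))=f_0$ and $V'=V_0$, so
\[
\Delta P_{\mathrm{wake}} = r\,e_l + \Phi_l ,
\]
with $e_l=e_d(V_0,f_0)$ and $\Phi_l=\Phi_l(V_0)$. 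It matters that $f'$ is pinned at $f_0$ for every admissible $r$. Only then are $e_l$ and $\Phi_l$ constants independent of $r$, which is what makes the comparison linear in $r$.

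The comparison itself is the inequality $r\,e_h < r\,e_l + \Phi_l$, equivalently $r\,(e_h-e_l) < \Phi_l$. I will split on the sign of $e_h-e_l$. If $e_h>e_l$, dividing by the positive gap gives exactly $r<r^*=\Phi_l/(e_h-e_l)$, and each step is reversible, which gives the ``if and only if''. If $e_h\le e_l$, the left side is $\le 0$. Provided $\Phi_l>0$, the inequality holds for every $r\ge 0$, so riding always wins; this is the ``always'' clause.

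The only real obstacle is bookkeeping, not algebra. I have to justify that the two regimes are the right ones, namely that ``room for the thread'' means the ride case with no change to the operating point, and that ``fits at the floor'' means the wake case at $(V_0,f_0)$. I would state these identifications explicitly at the start. I would also note the tacit assumptions: $\Phi_l>0$ (an awake cluster pays a positive overhead), $r>0$, and strict versus non-strict inequality at $r=r^*$, where the two costs tie. With these in place the proof is a few lines.
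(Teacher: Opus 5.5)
Your proposal is correct and takes the paper's approach. The paper's proof just states that the two costs are $r\,e_h$ and $r\,e_l+\Phi_l$ and compares them. Your reading of the ride and wake cases of Proposition~\ref{prop:costs}, with $f'=f_0$, and your explicit assumption $\Phi_l>0$ for the ``always'' clause, supply the bookkeeping the paper leaves implicit.
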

\begin{proof}
The two costs are $r\,e_h$ and $r\,e_l+\Phi_l$.
\end{proof}

In units of cores the rule needs only two ratios. Let $r_l$ be what one core of $l$ serves at its
floor, $\omega=\Phi_l/(r_l e_l)$ the overhead over that core's power, and $\tau=e_h/e_l$ the
passenger tax. Then
\begin{equation}
\frac{r^*}{r_l}=\frac{\omega}{\tau-1}.
\end{equation}
Riding pays only when the tax is small or the cluster to wake is costly next to one of its cores
(Table~\ref{tab:threshold}). With the model's M6 values (Section~\ref{sec:anchors}: $e_h=213$\,pJ on a
performance core at the fast cluster's top voltage, where we measured 206, and $e_l=72$\,pJ on an
efficiency core at its floor), an overhead of 0.05\,W gives $r^*=\rstar$\,Gops/s, a few percent of an efficiency core.

\begin{table}[t]
\caption{The threshold $r^*$ in cores of the efficient cluster, $\omega/(\tau-1)$.}
\label{tab:threshold}
\centering\small
\begin{tabular}{rrrrr}
\toprule
& \multicolumn{4}{c}{tax $\tau$}\\
\cmidrule(lr){2-5}
$\omega$ & 1.5 & 2 & 3 & 5\\
\midrule
0.1 & 0.20 & 0.10 & 0.05 & 0.03\\
0.3 & 0.60 & 0.30 & 0.15 & 0.07\\
1 & 2.00 & 1.00 & 0.50 & 0.25\\
\bottomrule
\end{tabular}

\end{table}

\paragraph{Bursty threads.} Proposition~\ref{prop:threshold} assumes that the woken cluster stays
awake while the thread lives. A thread that runs in bursts lets it sleep in between.

\begin{proposition}[Bursty threads]
\label{prop:bursty}
Let the thread run $\nu$ bursts per second, and let cluster $l$ spend $E_w$ to wake up and fall
asleep, and stay awake for a time $t_h$ after each burst. If the gaps are longer than $t_h$, riding
$h$ costs less if and only if
\begin{equation}
\nu>\nu^*=\frac{r\,\bigl(e_h-e_l-\Phi_l/(I_l f_l)\bigr)}{E_w+\Phi_l t_h}.
\end{equation}
\end{proposition}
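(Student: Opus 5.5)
We compare the average power of the two placements over one second, which holds $\nu$ bursts, in the same spirit as the proof of Proposition~\ref{prop:threshold}. The first step fixes what a burst is. The thread still demands $r$ operations per second on average, so each burst carries $r/\nu$ operations. On cluster $l$ at its floor these run at rate $I_l f_l$, so a burst lasts $r/(\nu I_l f_l)$ seconds and the cluster is busy a fraction $r/(I_l f_l)$ of the time. The hypothesis that the gaps exceed $t_h$ means every burst pays one full wake--sleep cycle, with no merging of awake periods. That is what makes the per-burst accounting exact.

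Next we write the two costs. Riding $h$ costs $r\,e_h$ on average, exactly as in Proposition~\ref{prop:costs}, because $h$ is awake anyway and its overhead is already paid. Waking $l$ costs, per second, the dynamic energy $r\,e_l$, plus $\nu E_w$ for the transitions, plus overhead $\Phi_l$ for the time $l$ is awake. That awake time is the busy time $r/(I_l f_l)$ plus the hangover $\nu t_h$. The waking cost is therefore
\[
r\,e_l+\nu E_w+\Phi_l\Bigl(\frac{r}{I_l f_l}+\nu t_h\Bigr).
\]

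Finally we compare. Riding is cheaper if and only if
\[
r\,e_h < r\,e_l + \frac{r\,\Phi_l}{I_l f_l} + \nu\,(E_w+\Phi_l t_h).
\]
Rearranging gives $\nu\,(E_w+\Phi_l t_h) > r\bigl(e_h-e_l-\Phi_l/(I_l f_l)\bigr)$. Dividing by the positive quantity $E_w+\Phi_l t_h$ yields $\nu>\nu^*$, which is the stated rule.

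If the numerator of $\nu^*$ is nonpositive, then $\nu^*\le 0$ and riding wins at every frequency. This is consistent with the statement, because the condition $\nu>\nu^*$ then holds for all $\nu\ge 0$.

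The main obstacle is modeling, not algebra: we must be precise about what $\Phi_l$ is charged over. We take it to be charged only while $l$ is awake, that is, during bursts and during the hangover $t_h$. We also assume $e_l$ is the energy per operation at the floor with static power already included through~(\ref{eq:e}). Under these conventions the term $\Phi_l/(I_l f_l)$ is exactly the overhead that busy time adds per operation. We would state these conventions explicitly so the formula is not misread as double counting.
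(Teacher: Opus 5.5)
Your proposal is correct and follows the paper's proof: the paper likewise charges waking $l$ as $r\,e_l+\Phi_l\bigl(r/(I_l f_l)+\nu t_h\bigr)+\nu E_w$, that is, per burst one transition plus overhead over busy and hold time, and compares it with the ride cost $r\,e_h$. Your added remarks make explicit what the paper leaves implicit: the burst length, the role of the gap hypothesis in ruling out merged awake periods, and the sign of the numerator (where you also implicitly need $E_w+\Phi_l t_h>0$).
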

\begin{proof}
Waking $l$ costs $r\,e_l+\Phi_l\bigl(r/(I_l f_l)+\nu t_h\bigr)+\nu E_w$: each burst pays the
transition and the overhead for its busy time and its hold time.
\end{proof}

Frequent small bursts should ride, rare large ones should wake. This is the break-even reasoning of
dynamic power management~\cite{benini_dpm_tvlsi2000}, applied to placement. When $\nu t_h\ge1$ the
cluster never sleeps and Proposition~\ref{prop:threshold} returns.

\begin{proposition}[Ride or ride]
\label{prop:ride}
Between two awake clusters with room for the thread, the cheaper one is the one with the lower
energy per operation at its current operating point, whatever the demand.
\end{proposition}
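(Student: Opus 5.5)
The plan is to apply the \emph{ride} case of Proposition~\ref{prop:costs} to both clusters and compare the two costs directly. Call the clusters $A$ and $B$, both awake, with the thread fitting on a core of design $d_A$ in $A$ ($r\le uI_{d_A}f_A$) and on a core of design $d_B$ in $B$ ($r\le uI_{d_B}f_B$). This is what ``room for the thread'' means in the statement.

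Neither placement moves an operating point. So the term $D_C(V',f')-D_C(V_C,f_C)$ and the overhead difference $\Phi_C(V')-\Phi_C(V_C)$ in the raise case vanish. No overhead is added either, since both clusters already pay it. Proposition~\ref{prop:costs} then gives the two costs
\[
\Delta P_A=r\,e_{d_A}(V_A,f_A),\qquad \Delta P_B=r\,e_{d_B}(V_B,f_B).
\]

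Both costs are linear in $r$ with zero intercept. The difference $\Delta P_A-\Delta P_B=r\bigl(e_{d_A}(V_A,f_A)-e_{d_B}(V_B,f_B)\bigr)$ therefore has the sign of the bracket for every $r>0$. This is the claimed demand-independence: the cheaper cluster is the one with the lower energy per operation at its current point. This is also the contrast with Proposition~\ref{prop:threshold}, where the fixed term $\Phi_l$ broke homogeneity and created the threshold $r^*$.

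The only step needing care is the justification that the operating points stay fixed, which the governor rule provides. The clock is $f=\max\bigl(f_0,\max_j r_j/(uI_j)\bigr)$, and under the room condition the new thread's term $r/(uI_d)$ is at most $f_C$. The maximum is therefore unchanged, and so is $V_C(f_C)$. I also note that the thread gets a core of its own, so the other threads' work and hence $D_C$ are unaffected. If a cluster offers several designs with room, the comparison applies to each candidate core, and the minimum over them is still chosen by energy per operation alone.
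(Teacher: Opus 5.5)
Your proof is correct and follows the paper's own argument: both ride costs from Proposition~\ref{prop:costs} are $r$ times a constant, so which one is lower does not depend on $r$. Your extra check that the governor leaves both operating points fixed under the room condition makes explicit what the ride case already assumes.
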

\begin{proof}
Both costs are $r$ times a constant.
\end{proof}

A large core held at a low voltage can therefore beat a small core held at a high one. Linux's
energy model already makes this comparison~\cite{linuxeas}.

\section{Which tier?}
\label{sec:tiers}

With one thread and nothing else running, the best tier rises with demand: small cores for light
work, large ones for heavy work. This staircase follows from a standard argument whenever larger
tiers gain on smaller ones as demand grows~\cite{milgrom_mcs_econometrica1994}. Busy clusters
break it.

\begin{thm}
\label{thm:shape}
(a) If every cluster is asleep, or every cluster is awake at its own floor, the best tier is
nondecreasing in demand (under the staircase condition above).
(b) If a larger cluster $h$ is held awake by other threads while a smaller cluster $l$ with
$e_l<e_h$ sleeps, and $l$ can serve more than $r^*$, the best tier is not monotone in demand: $h$
below $r^*$, a smaller tier above $r^*$, and a tier at least as large as $h$ again for demand that
no smaller tier can serve.
\end{thm}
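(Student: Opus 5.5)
For part (a), I would reduce both cases to a single-variable cost comparison and then invoke the monotone comparative statics result \cite{milgrom_mcs_econometrica1994}. When every cluster is asleep, Proposition~\ref{prop:costs} gives the cost of tier $k$ as the wake cost $C_k(r)=r\,e_k(V_k(f'),f')+\Phi_k(V_k(f'))$ with $f'=\max(f_0,r/(uI_k))$. When every cluster is awake at its floor, the cost is the ride cost $r\,e_k(V_{0,k},f_{0,k})$ up to the floor capacity, and the raise cost beyond it. In the raise case the cost is the thread's own term plus the change in overhead, since no other threads run. In both cases the best tier is $\arg\min_k C_k(r)$ over the tiers that can serve $r$.

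I would read the staircase condition as decreasing differences: for tiers $k<k'$, the gap $C_{k'}(r)-C_k(r)$ is nonincreasing in $r$ wherever both tiers are feasible. Topkis/Milgrom--Shannon then says the largest minimizer is nondecreasing in $r$. The feasible set also only loses small tiers as $r$ grows, because tier $k$ is feasible iff $r\le uI_k f_{\max,k}$. This is an ascending change in the constraint set, so the monotone selection survives. The only care needed is to fix a tie-breaking rule, taking the largest minimizer.

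For part (b), I would exhibit the three regimes directly. The first regime is $r<r^*$. Because $h$ is awake and has room, Proposition~\ref{prop:ride} makes $h$ the cheapest awake option among those at their current points. Proposition~\ref{prop:threshold} shows riding $h$ beats waking $l$. I must also check that no other sleeping tier $m$ beats $h$. I would handle this by taking $l$ to be the sleeping tier attaining the smallest $r^*$, so $r^*=\min_m \Phi_m/(e_h-e_m)$. Riding $h$ then beats every wake option for small $r$, and the optimum is $h$.

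The second regime is $r^*<r\le$ the capacity of $l$ at its floor. Here Proposition~\ref{prop:threshold} reverses, so waking $l$ beats riding $h$. Hence the optimum is some tier other than $h$ and other than any tier priced at least as high. I would argue that the optimum is a tier smaller than $h$, either $l$ itself or a smaller sleeping tier that is cheaper still. Raising $h$ is not needed in this range, since $h$ has room, and raising would only add the $D$ and $\Phi$ terms of Proposition~\ref{prop:costs}. The third regime is $r$ above the largest capacity of every tier smaller than $h$. There, feasibility alone forces a tier at least as large as $h$. The sequence of optima is therefore $h$, then smaller, then $\ge h$, which is not monotone.

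The main obstacle is the middle regime. "Smaller tier above $r^*$" needs the hypothesis that $l$ can serve more than $r^*$, so that the interval $(r^*, \text{cap}_l]$ is nonempty. It also needs the riding option on $h$ to stay available, meaning $r\le uI_h f_h$, across that interval. If $h$'s room ran out before $r^*$, the comparison would instead be raise versus wake, and raising is costlier still, which only helps. I would state this explicitly. The other delicate point is the boundary between a ``smaller tier'' and a tier larger than $h$ in the middle band. I would close it by noting that any tier larger than $h$ is asleep or costlier per operation. Waking it costs at least its own overhead plus $r e$ with $e\ge e_l$ under the tier ordering, which the cheapest-wake choice of $l$ already undercuts.
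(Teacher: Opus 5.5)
Your skeleton is the paper's. In (a) it is decreasing differences plus Milgrom--Shannon. In (b) it is the $h$-versus-$l$ comparison of Proposition~\ref{prop:threshold} flipping at $r^*$, followed by capacity forcing a tier at least as large as $h$. Your cut-off, ``no tier smaller than $h$ can serve,'' is the right one; the paper's ``above what $l$ can serve'' is looser.

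Where you go beyond the paper, which never looks past $h$ and $l$, two steps fail. The first is in the low-demand regime of (b). Proposition~\ref{prop:ride} does not make $h$ the cheapest awake option. It only says the cheapest awake cluster with room is whichever has the lowest energy per operation where it is held. If any other cluster with $e<e_h$ is awake with room, it beats $h$ at every demand, so $h$ is never best. That is exactly the ``T2 awake'' line of Figure~\ref{fig:decision}. So ``$h$ below $r^*$'' needs a hypothesis that cannot be derived: no cheaper cluster is awake, for instance every cluster but $h$ sleeps. It is implicit in the paper and explicit in Corollary~\ref{cor:top}, and you should state it. Similarly, your remark that raising ``only helps'' holds in the middle band but not in the first. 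Where $h$'s room ends below $r^*$, $h$ can lose to waking $l$. Non-monotonicity survives, since it needs only one small $r$ where $h$ wins.

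The second failing step is your exclusion of tiers larger than $h$ in the middle band. A cost of at least $\Phi_g+re_g$ with $e_g\ge e_l$ is not at least $\Phi_l+re_l$ unless also $\Phi_g\ge\Phi_l$. Also, an awake larger tier need not cost more per operation, as the remark after Proposition~\ref{prop:ride} notes. Your own choice of $l$ as the minimizer of $\Phi_m/(e_h-e_m)$ repairs this without any tier ordering:
\begin{itemize}
\item any sleeping $m$ costs at least $re_h$ whenever $r\le\Phi_m/(e_h-e_m)$;
\item $re_h>re_l+\Phi_l$ once $r>r^*$;
\item so $l$ itself is optimal for $r$ above $r^*$ but below both $l$'s capacity and the next-smallest threshold.
\end{itemize}
This choice replaces the statement's $r^*$ by the smallest threshold. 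That is a needed correction, not a proof for an arbitrary given $l$, and you should say so.

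Finally, in (a) you skip the one step the paper's proof consists of. The staircase condition is posited for a single thread with nothing else running. Each of the two states shifts every tier's cost by an $r$-independent constant, so the differences keep their slope. You instead posit decreasing differences separately for each state. Your own formulas give the transfer at once: the awake-at-floor cost, ride or raise with no other threads, is the asleep cost minus $\Phi_k(V_{0,k})$.
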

\begin{proof}
(a) In both states each tier's cost is its single-thread cost plus a constant (its own overhead
when all clusters sleep, nothing when all are awake), and constants do not change how costs differ
as $r$ grows, so the staircase stays. (b) The held cluster costs $r\,e_h$, with no constant; the
sleeping one costs $\Phi_l+r\,e_l$. Their difference, $r(e_h-e_l)-\Phi_l$, grows with $r$, so the
order flips at $r^*$ (Proposition~\ref{prop:threshold}). Above what $l$ can serve, only larger tiers
remain.
\end{proof}

Fixed wake costs alone never break the staircase. What breaks it is a larger tier that other
threads keep awake: its overhead is already paid, but its voltage is high.

\begin{corollary}[The top tier]
\label{cor:top}
The top tier is the right place for a light thread only if no cheaper cluster is awake with room
and the thread is lighter than the wake threshold of the cheaper sleeping clusters, or if the
thread is too heavy for every lower tier. A middle tier that is awake is the cheapest ride for
every light thread below its own threshold.
\end{corollary}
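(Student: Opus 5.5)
The plan is to read the corollary as a case analysis over where the thread can go, using only Propositions~\ref{prop:threshold} and~\ref{prop:ride} and Theorem~\ref{thm:shape}. We treat a \emph{light} thread as one that some lower tier can serve, and assume, as in the setting of Theorem~\ref{thm:shape}(b), that the top tier $h$ is held awake by other threads with room. Its cost is then the ride cost $r\,e_h$ from Proposition~\ref{prop:costs}. Each lower cluster $m$ that can serve the thread is then one of three kinds: awake with room, asleep, or awake but needing a raise. We compare $h$ against each kind in turn.

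\textbf{First part (the ``only if'').} Suppose $h$ is optimal and the thread is light. If some cluster $m$ with $e_m<e_h$ is awake with room, Proposition~\ref{prop:ride} says riding $m$ is strictly cheaper for every $r$, which is a contradiction. So no cheaper cluster is awake with room. Now take any sleeping lower cluster $l$ with $e_l<e_h$ that fits the thread at its floor. Proposition~\ref{prop:threshold} says $h$ beats waking $l$ only if $r<r^*_l=\Phi_l/(e_h-e_l)$. Applying this to every such $l$ gives $r<\min_l r^*_l$, which is the stated wake condition. If instead the thread is too heavy for every lower tier, no lower alternative exists and the top tier is forced. This is the third regime of Theorem~\ref{thm:shape}(b).

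\textbf{Second part.} Let a middle cluster $m$ be awake with room, and let the thread be below $m$'s own wake thresholds $r^*=\Phi_l/(e_m-e_l)$ against each sleeping cheaper cluster $l$. Then $m$ wins in three comparisons:
\begin{itemize}
\item against any awake cluster with higher $e$, including $h$, by Proposition~\ref{prop:ride};
\item against any sleeping $l$, by Proposition~\ref{prop:threshold} applied with $m$ in the role of $h$;
\item against any sleeping cluster with $e\ge e_m$, trivially, since waking it adds a nonnegative overhead to a cost already no smaller.
\end{itemize}

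\textbf{Main obstacle.} Awake clusters that must \emph{raise} are not covered by Propositions~\ref{prop:threshold} and~\ref{prop:ride}. For these I would use the raise formula of Proposition~\ref{prop:costs}. Its extra terms $D(V')-D(V)$ and $\Phi(V')-\Phi(V)$ are nonnegative, since $V(f)$ is nondecreasing and $D$ and $\Phi$ increase in $V$. Its first term $r\,e_d(V',f')$ is at least the ride cost at $(V',f')$. A raise is therefore never cheaper than a ride at the raised operating point. The gap is that this bound compares the raise against the cluster's own ride cost at $(V',f')$, which may still fall below $r\,e_h$. The ride cost can be read as \emph{at least} the ride cost at the current point only if $e_d(V,f)$ is nondecreasing along the table $V(f)$. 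That monotonicity holds once the floor $V_0$ is above the energy-optimal voltage, and I would state it as an explicit assumption. A second point to state is that ``cheaper'' means lower energy per operation at the current operating point, which is the ordering Proposition~\ref{prop:ride} uses.
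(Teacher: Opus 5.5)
Your argument is the one the paper intends, since the corollary is read directly off Theorem~\ref{thm:shape}(b). Proposition~\ref{prop:ride} rules out the top tier whenever a cheaper awake cluster has room, Proposition~\ref{prop:threshold} supplies the wake thresholds, and the same two propositions put an awake middle tier ahead of the top tier at every demand and ahead of each smaller sleeping tier below its threshold; this tacitly takes the middle tier to be the cheapest awake cluster, as in Figure~\ref{fig:decision}, and your list of comparisons assumes the same.

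Your ``main obstacle'' is not one. Raise options never enter the necessary conditions of the first sentence, and the second sentence only weighs the middle tier's ride against other rides and wakes, so no monotonicity of $e$ along $V(f)$ needs to be assumed.
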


\paragraph{Many tiers.} Suppose energy per operation grows by a factor $b$ from one tier to the
next at the same voltage ($b$ is the square root of the area ratio under Pollack's rule). Riding
tier $h$ at voltage $V_h$ instead of tier $l$ at its floor $V_0$ then costs
$\tau=b^{\,h-l}(V_h/V_0)^2$ per operation. For an adjacent tier awake at the same voltage
$\tau=b$, and the threshold is $\omega/(b-1)$: with closely spaced tiers, light threads should ride
the nearest awake tier rather than wake a smaller one.

\begin{figure}[t]
\centering
\begin{tikzpicture}
\begin{axis}[width=\columnwidth, height=4.6cm, xmode=log, xmin=0.02, xmax=16.5,
  ymin=0.5, ymax=5.6, ytick={1,...,5}, yticklabels={T1,T2,T3,T4,T5},
  xlabel={Demand $r$ of the light thread (Gops/s)}, ylabel={Best tier},
  legend style={at={(0.36,0.98)}, anchor=north, font=\scriptsize, draw=none, fill=none}]
\addplot[const plot, thick, black] table[x=r, y expr=\thisrow{asleep}+0.14] {data/decision.dat};
\addlegendentry{others asleep}
\addplot[const plot, thick, dashed, black] table[x=r, y=ttwo] {data/decision.dat};
\addlegendentry{T2 awake}
\addplot[const plot, thick, dotted, black] table[x=r, y expr=\thisrow{awake}-0.14] {data/decision.dat};
\addlegendentry{all awake}
\end{axis}
\end{tikzpicture}
\caption{The best tier for a light thread on a five-tier processor while two heavy threads hold
the top tier, T5, at its top clock. With the other tiers asleep, tiny threads ride T5; with T2
awake they ride T2; with every tier awake the choice rises with demand. Lines are offset slightly
for legibility.}
\label{fig:decision}
\end{figure}

Figure~\ref{fig:decision} shows an example built on the M6's designs (Section~\ref{sec:anchors}):
T1 is smaller than the efficiency core, T2 is the efficiency core, T3 the performance core, T4
the super core and T5 larger, each in its own cluster, with overheads from 0.02\,W (T1) to
0.2\,W (T5). With the other tiers asleep, threads under 0.04\,Gops/s ride T5, then each tier takes
its turn as demand grows, and T5 returns for threads too heavy for T4. With T2 awake, light
threads ride T2 up to 0.94\,Gops/s. With every tier awake, the staircase returns.

\section{Two policies, two errors}
\label{sec:policies}

Two simple policies are in use. \emph{Grouping} puts a light thread with the heavy threads of its
application: macOS recommends a cluster per thread group and quality-of-service class, and an
application's threads share a group by default~\cite{xnu_edgesched2025,xnu_src2025}.
\emph{Ignoring idle costs} puts it wherever energy per operation is lowest, as an energy model does
when an idle cluster costs nothing. Mainline Linux removed idle costs from its energy model in
2018~\cite{eggemann2018eas}; in our reading of its code~\cite{feec_src2026} an empty cluster
is free, so it never rides.

\begin{proposition}[What each policy loses]
\label{prop:policies}
In the setting of Proposition~\ref{prop:threshold}, always riding loses nothing below $r^*$ and
$r(e_h-e_l)-\Phi_l$ above it; its cost over the optimum, $r e_h/(r e_l+\Phi_l)$, rises toward the
tax $\tau$. Always waking loses $\Phi_l-r(e_h-e_l)$ below $r^*$ and nothing above it; its ratio
$(r e_l+\Phi_l)/(r e_h)$ grows without bound as $r\to0$, but its loss never exceeds $\Phi_l$. For
demand spread evenly in log scale over $[r_a,r_b]$, with $L=\ln(r_b/r_a)$, the mean ratios are
\begin{align*}
\text{ride:}\quad & \Bigl[\ln\frac{r^*}{r_a}+\tau\ln\frac{r_b e_l+\Phi_l}{r^* e_l+\Phi_l}\Bigr]\Big/L,\\
\text{wake:}\quad & \Bigl[\frac{1}{\tau}\ln\frac{r^*}{r_a}+\frac{\Phi_l}{e_h}\Bigl(\frac{1}{r_a}-\frac{1}{r^*}\Bigr)+\ln\frac{r_b}{r^*}\Bigr]\Big/L.
\end{align*}
\end{proposition}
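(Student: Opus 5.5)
The plan is to work entirely from the two costs in the proof of Proposition~\ref{prop:threshold}: riding costs $P_{\rm ride}(r)=r\,e_h$ and waking costs $P_{\rm wake}(r)=r\,e_l+\Phi_l$. The optimum is $P^\star(r)=\min(P_{\rm ride},P_{\rm wake})$. I assume $e_h>e_l$; otherwise $r^*$ is not positive and riding is always optimal. By Proposition~\ref{prop:threshold}, $P^\star=P_{\rm ride}$ for $r<r^*$ and $P^\star=P_{\rm wake}$ for $r\ge r^*$. The whole proposition is then a case split at $r^*$ followed by elementary calculus.

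\emph{Losses and pointwise ratios.} For always riding, the loss $P_{\rm ride}-P^\star$ is $0$ below $r^*$ and $r(e_h-e_l)-\Phi_l$ above it. The ratio above $r^*$ is $g(r)=re_h/(re_l+\Phi_l)$. To show it rises toward $\tau$, I compute $g'(r)=e_h\Phi_l/(re_l+\Phi_l)^2>0$ and note that $g(r)\to e_h/e_l=\tau$ as $r\to\infty$. For always waking, the loss is $\Phi_l-r(e_h-e_l)$ below $r^*$ and $0$ above. The ratio below $r^*$ is $1/\tau+\Phi_l/(e_h r)$, which diverges as $r\to0$. The loss itself is at most $\Phi_l$, because $r(e_h-e_l)\ge0$.

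\emph{Log-uniform means.} I assume $r_a\le r^*\le r_b$, which is the regime in which the stated formulas hold; otherwise one of the two pieces is empty and the formula truncates. The mean of a ratio $\rho(r)$ is $\frac1L\int_{r_a}^{r_b}\rho(r)\,dr/r$. I split each integral at $r^*$.

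For riding, the piece on $[r_a,r^*]$ integrates $1\cdot dr/r$ and gives $\ln(r^*/r_a)$. The piece on $[r^*,r_b]$ integrates $e_h/(re_l+\Phi_l)$, which has antiderivative $(e_h/e_l)\ln(re_l+\Phi_l)$. That yields $\tau\ln\frac{r_be_l+\Phi_l}{r^*e_l+\Phi_l}$.

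For waking, the piece on $[r_a,r^*]$ integrates $\frac{1}{\tau r}+\frac{\Phi_l}{e_h r^2}$. This gives $\frac1\tau\ln\frac{r^*}{r_a}+\frac{\Phi_l}{e_h}\bigl(\frac1{r_a}-\frac1{r^*}\bigr)$. The piece on $[r^*,r_b]$ integrates $1$ and gives $\ln(r_b/r^*)$. Dividing each total by $L$ gives the two displayed formulas.

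The only step needing care is bookkeeping rather than difficulty. I must make explicit the assumption that $r^*$ lies in $[r_a,r_b]$, and I must check that the ratio is continuous at $r^*$, where both ratios equal $1$. This continuity confirms that the split introduces no boundary term.
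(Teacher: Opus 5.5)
Your proposal is correct and follows the paper's own one-line proof. You read the losses off the two costs $re_h$ and $re_l+\Phi_l$ of Proposition~\ref{prop:threshold}, split at $r^*$, and integrate the ratios against $dr/(rL)$. Your explicit assumptions $e_h>e_l$ and $r_a\le r^*\le r_b$ fill in what the paper leaves implicit. The continuity check at $r^*$ is harmless but unnecessary, since a single point contributes nothing to the integral.
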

\begin{proof}
The losses follow from Proposition~\ref{prop:threshold}; the means integrate them against
$dr/(rL)$.
\end{proof}

\begin{figure}[t]
\centering
\begin{tikzpicture}
\begin{axis}[width=\columnwidth, height=4.4cm, xmode=log, ymode=log, xmin=0.02, xmax=5,
  ymin=0.9, ymax=14, ytick={1,2,4,8}, yticklabels={1,2,4,8},
  xlabel={Demand $r$ of the light thread (Gops/s)}, ylabel={Energy over the optimum},
  legend style={at={(0.98,0.98)}, anchor=north east, font=\scriptsize, draw=none, fill=none}]
\addplot[thick, black] table[x=r, y=ride] {data/losses.dat};
\addlegendentry{always ride (grouping)}
\addplot[thick, dashed, black] table[x=r, y=wake] {data/losses.dat};
\addlegendentry{always wake (no idle cost)}
\draw[dotted] (axis cs:\rstar,0.9) -- (axis cs:\rstar,14);
\node[font=\scriptsize, anchor=south west] at (axis cs:\rstar,1.05) {$r^*$};
\end{axis}
\end{tikzpicture}
\caption{What the two policies pay over the optimum on the M6 numbers ($e_h=213$\,pJ,
$e_l=72$\,pJ, $\Phi_l=0.05$\,W, so $r^*=\rstar$\,Gops/s and $\tau=\taum$).}
\label{fig:losses}
\end{figure}

The two policies err on opposite sides of $r^*$ (Figure~\ref{fig:losses}). Grouping overpays for
every thread heavier than $r^*$, by up to the tax; ignoring idle costs overpays for every thread
lighter than $r^*$, by up to the overhead it ignores. For demand between 0.05 and 5\,Gops/s, they
use 51\% and 59\% more energy than the rule on average.

More tiers make grouping worse (Table~\ref{tab:kcost}). For ladders of two to five tiers spanning
the same designs, with two heavy threads holding the top tier at its top clock, grouping costs 80
to 145\% more than the rule when the other tiers sleep, and over 300\% when a middle tier is awake.
Each new tier is one more cheap place that the grouped thread does not use.

\begin{table}[t]
\caption{Grouping against the rule, for demand from 0.02 to 8\,Gops/s, on ladders of $K$ tiers
whose top tier two heavy threads hold at its top clock; ``top tier best'' is the share of demands
for which riding the top tier is right.}
\label{tab:kcost}
\centering\small
\begin{tabular}{rrrr}
\toprule
Tiers & \multicolumn{2}{c}{Others asleep} & Middle awake\\
\cmidrule(lr){2-3}\cmidrule(lr){4-4}
$K$ & grouping costs & top tier best & grouping costs\\
\midrule
2 & +80\% & 34\% & n/a\\
3 & +129\% & 17\% & +330\%\\
4 & +144\% & 13\% & +328\%\\
5 & +145\% & 13\% & +382\%\\
\bottomrule
\end{tabular}

\end{table}

\section{Anchors from an Apple M6}
\label{sec:anchors}

We set the model's parameters from measurements on a Mac mini with an Apple M6 under macOS 27,
taken without root from the per-cluster power estimates the chip's power manager publishes. The
six efficiency cores form one cluster (0.97 to 2.94\,GHz, 630 to 810\,mV); the two super and four
performance cores share another (1.44 to 4.79\,GHz, 725 to 1105\,mV). We switched a light FMA
thread on and off every 5\,s and took its energy from the change in cluster power over its work in
the same windows. From these runs: $c=170$\,pJ/V$^2$ per FMA on efficiency and performance cores
and 270 on super cores; 2.5 and 3.8 FMAs per cycle on efficiency and super cores (we assume 3.2 on
performance cores); $u=0.7$. The Mac was never idle, so overheads could not be measured; we vary
them.

Three observations check the model.
\begin{itemize}
\item \emph{Riding at a held voltage costs about the voltage ratio squared.} Beside two busy
threads that held the fast cluster at its top state, the light thread ran on performance cores
for 206\,pJ per FMA, against 94\,pJ on the efficiency cores when it was its own application: 2.19
times, where $(1105/727)^2$ predicts 2.31.
\item \emph{Raising costs more.} With the busy threads below the top state, the light thread
lifted the fast cluster's voltage by 16 to 66\,mV while it ran and paid up to 3.9 times its cost
on the efficiency cores, the raise case of Proposition~\ref{prop:costs}.
\item \emph{macOS moves a thread to the fast cluster when the efficient one runs out of headroom.}
A light thread in its own application stayed on the efficiency cores up to 4.5\,Gops/s and moved
by 6\,Gops/s; at $u=0.7$ the model puts the efficiency cluster's limit at 5.1\,Gops/s. The
efficiency cluster was awake in all our runs, so by Proposition~\ref{prop:threshold} it was the
right place for every light thread it could serve.
\end{itemize}

The same trade shows on a chip with a single voltage rail. On an AMD Ryzen 9 3900X, a memory-bound
thread keeps the idle cores of its core complex awake, at 0.4 to 0.8\,W each, which is an overhead
in our terms; filling one complex before opening another used 6.7\% less rail energy, as
Proposition~\ref{prop:threshold} predicts when the overhead is large and the gap in energy per
operation small.

\section{Related work}
\label{sec:related}

\paragraph{Energy-aware schedulers.} Linux's energy-aware scheduler estimates, for each candidate
cluster, the energy of placing a task there, including the cost of raising the cluster's operating
point for the tasks already running~\cite{linuxeas,feec_src2026}. Its first energy model,
kept in Android, also charged the busy and idle power of whole
clusters~\cite{rasmussen2014eas,aosp_eas49_src}; mainline Linux dropped idle costs in
2018~\cite{eggemann2018eas}. Qualcomm's scheduler added a static cluster power cost to bias tasks
away from idle clusters~\cite{qcom_hmp_powercost_src}, and an earlier Linux proposal packed small
tasks onto busy cores below a fixed load~\cite{guittot2013packing}. These systems make the
comparison of Proposition~\ref{prop:threshold} numerically or by heuristic; we give its closed
form, its bursty version, and the shape it implies.

\paragraph{Apple.} Apple's scheduler notes name the passenger tax and recommend a cluster per
thread group and quality-of-service class~\cite{xnu_edgesched2025}, and its performance controller
runs a shared cluster at the highest of its thread groups' requests~\cite{apple_clpc_patent2019}.
We measure the tax and place the grouping policy against the rule.

\paragraph{Theory.} Running a voltage island at the frequency of its most demanding core costs at
most a bounded factor over per-core frequencies~\cite{pagani_sfa_tecs2014}. Below a critical speed,
slowing down wastes energy~\cite{jejurikar2004}, and dynamic power management decides when to sleep
from a break-even time~\cite{benini_dpm_tvlsi2000}. Heterogeneous cores save energy by matching work to
core type~\cite{kumar_hetero_micro2003}, and studies of how many types to build find that two or
three capture most of the benefit~\cite{vancraeynest_designchoices_taco2013,azizi_marginal_isca2010}; ``non-monotonic'' core selection there
means core types that cannot be ranked~\cite{navada_nonmonotonic_pact2013}, not the non-monotone
placement of Theorem~\ref{thm:shape}.

\section{Conclusion}
\label{sec:conclusion}

Where a light thread should run comes down to three costs and one threshold. It should ride a
cluster that is already awake while its demand is below $r^*=\Phi_l/(e_h-e_l)$, and otherwise wake
the smallest tier that serves it. Because a busy cluster's overhead is already paid while its
voltage is high, the best tier is not monotone in demand, and the two common policies fail on
opposite sides of the threshold. More tiers make the choice matter more: in our five-tier model,
grouping costs 80 to 145\% more energy than the rule. A scheduler needs three numbers per
cluster to apply it: whether it is awake, its energy per operation where it is held, and its
overhead.

\paragraph{Limitations.} The model gives each thread its own core and a steady demand, except in
Proposition~\ref{prop:bursty}. The overheads are not measured on the M6, whose clusters never
idled, so the thresholds we quote are for a range of overheads. Our statement about mainline Linux
comes from reading its code, not from a measurement. On the M6 the scheduler sometimes ran the light
thread on the same cores as the busy ones, which our model does not capture.

\begin{acks}
We used a generative AI assistant (Claude, by Anthropic) to write measurement and analysis scripts,
to search the literature, and to draft and edit the text. We reviewed the measurements, the
analysis and the text, and we take responsibility for them.
\end{acks}

\bibliographystyle{ACM-Reference-Format}
\bibliography{refs}

\end{document}